\newcommand{\cent}[0]{\mbox{\textcent}}
\newcommand{\dollar}[0]{\$}
\newcommand{\aket}[1]{\langle #1 \rangle}
\newcommand{\mymatrix}[2]{\left( \begin{array}{#1} #2\end{array} \right)}
\newcommand{\myvector}[1]{\mymatrix{c}{#1}}
\newcommand{\mypar}[1]{\left( #1 \right)}
\newcommand{\myceil}[1]{ \left\lceil #1 \right\rceil }
\newcommand{\tildesigma}{\widetilde{\Sigma}}
\newcommand{\tildew}{\tilde{w}}
\newcommand{\pal}{\mathtt{PAL}}
\newcommand{\npal}{\mathtt{NPAL}}
\newcommand{\palnpal}{\mathtt{PAL \mbox{-} NPAL}}
\newcommand{\palnpalyes}{\mathtt{PAL \mbox{-} NPAL_{yes}}}
\newcommand{\palnpalno}{\mathtt{PAL \mbox{-} NPAL_{no}}}
\newcommand{\twin}{\mathtt{TWIN}}
\newcommand{\twint}{{\mathtt{TWIN}(t)}}
\newcommand{\manytwins}{\mathtt{MANYTWINS}}
\newcommand{\ttend}{\mathtt{END}}
\newtheorem{theorem}{Theorem}
\newtheorem{corollary}{Corollary}
\newtheorem{lemma}{Lemma}
\title{Exact Affine Counter Automata\thanks{Parts of the research work were done while Yakary{\i}lmaz was visiting Yamagata University in November 2016 and all authors were visiting Kyoto University in March 2017.}}
\author{Masaki Nakanishi\institute{Department of Education, Art and Science, Yamagata University,\\
		Yamagata, 990--8560, Japan}\email{masaki@cs.e.yamagata-u.ac.jp}
        \and Kamil Khadiev\institute{University of Latvia, Faculty of Computing, Center for Quantum Computer Science, R\={\i}ga, Latvia\\
        Kazan Federal University, Institute of Computational Mathematics and IT,\\ Kremlevskaya str. 18, Kazan, 420008, Russia\\}\email{kamilhadi@gmail.com}
        \and 
        Kri\v{s}j\=anis Pr\=usis \qquad Jevg\=enijs Vihrovs \qquad Abuzer Yakary{\i}lmaz\institute{University of Latvia, Faculty of Computing, Center for Quantum Computer Science,  R\={\i}ga,  Latvia}\email{krisjanis.prusis@lu.lv, jevgenijs.vihrovs@lu.lv, abuzer@lu.lv}       
        }
\begin{document}
\maketitle

\begin{abstract}
We introduce an affine generalization of counter automata, and analyze their ability as well as affine finite automata. Our contributions are as follows. We show that there is a language that can be recognized by exact realtime affine counter automata but by neither 1-way deterministic pushdown automata nor realtime deterministic $k$-counter automata. We also show that a certain promise problem, which is conjectured not to be solved by two-way quantum finite automata in polynomial time, can be solved by Las Vegas affine finite automata. Lastly, we show that  how a counter helps for affine finite automata by showing that the language $ \manytwins $, which is conjectured not to be recognized by affine, quantum or classical finite state models in polynomial time, can be recognized by affine counter automata with one-sided bounded-error in realtime.
\end{abstract}

\section{Introduction}

Quantum computation models can be more powerful than their classical counterparts. This is mainly because quantum models are allowed to use negative amplitudes, by which interference can occur between configurations. In order to mimic quantum interference classically, recently a new concept called affine computation was introduced \cite{DCY16A} and its finite automata versions (AfAs) have been examined \cite{DCY16A,VilY16A,BMY16A,HMY17}. Some underlying results are as follows: (i) they are more powerful than their probabilistic and quantum counterparts (PFAs and QFAs) with bounded and unbounded error; (ii) one-sided bounded-error AfAs and nondeterministic QFAs define the same class when using rational number transitions; and, (iii) AfAs can distinguish any given pair of strings by using two states  with zero-error. Very recently, affine OBDD was introduced in \cite{IKPVY17A} and it was shown that they can be exponentially narrower than bounded-error quantum and classical OBDDs.  

In this paper, we introduce (realtime) AfA augmented with a counter (AfCAs), and analyze their ability as well as Las Vegas AfAs. It is already known that AfAs can simulate QFAs exactly by a quadratic increase in the number of states \cite{VilY16A}. However, this simulation cannot be extended to the simulation of QFAs with a counter (QCAs). Therefore, the quantum interference used by QCAs cannot be trivially used by AfCAs. Besides, the well-formed conditions for QCAs can be complicated, but as seen soon, they are easy to check for AfCAs. Thus, we believe that AfCAs may present classical and simpler setups for the tasks done by QCAs.

Our main contribution in this paper is that we show that there is a language that can be recognized exactly (zero-error) by realtime AfCAs but neither by 1-way deterministic pushdown automata nor by realtime deterministic $k$-counter automata. This is the first separation result concerning AfCAs. This is a strong result since an exact one-way probabilistic one-counter automaton (PCA) is simply a one-way deterministic one-counter automaton (DCA) and it is still open whether exact one-way QCAs are more powerful than one-way DCAs and whether bounded-error one-way QCAs are more powerful than one-way bounded-error PCAs (see \cite{SY12A,NY15A} for some affirmative results). 

In \cite{RasY14A}, it was shown that a certain promise problem can be solved by two-way QFAs (2QCFAs) exactly but in exponential time, and bounded-error two-way PFAs (2PFAs) can solve the problem only if they are allowed to use logarithmic amount of memory. We show that the same problem can be solved by realtime Las Vegas AfAs or AfAs with restart in linear expected time. 
Lastly, we address the language $ \manytwins $, which is conjectured not to be recognized by affine, quantum or classical  finite state models in polynomial time. We show how a counter helps for AfAs by showing that $ \manytwins $ can be recognized by AfCAs with one-sided bounded-error in realtime read mode. 

In the next section, we provide the necessary background. Our main results are given in Sections 3, 4 and 5, respectively. Section 6 concludes the paper.

\section{Background}

We assume the reader to have the knowledge of automata theory, and familiarity with the basics of probabilistic and quantum automata. We refer \cite{SayY14A} and \cite{AY15A} for the quantum models. 

Throughout the paper, the input alphabet is denoted as $ \Sigma $ not including the left end-marker ($\cent$) and the right end-marker ($\dollar$). The set $ \tildesigma $ denotes $ \Sigma \cup \{\cent,\dollar\} $. For a given input $ w \in \Sigma^* $, $ |w| $ is the length of $w$, $w[i]$ is the $i$-th symbol of $w$, and $ \tildew = \cent w \dollar $. For any given string $ w \in \{1,2\}^* $, $ e(w) $ denotes the encoding of $ w $ in base-3. The value $ \overline{1} $ in a vector represents the value that makes the vector summation equal to 1, i.e., if the summation of all other entries is $ x $, then it represents the value $ 1-x $.

A (realtime) affine finite automaton (AfA) \cite{DCY16A} $ A $ is a 5-tuple
\[
	A = (S,\Sigma,\{M_\sigma \mid \sigma \in \tildesigma \},s_I,S_a),
\]
where $ S = \{s_1,\ldots,s_n\} $ is a finite set of states, $ \Sigma $ is a finite set of input symbols, $M_\sigma$ is the $ n\times n $ affine transition matrix for symbol $\sigma \in \tildesigma $, $ s_I \in S $ is the initial state, and $ S_a \subseteq S $ is a finite set of accepting states.

We consider a one-to-one correspondence between the set of configurations (i.e., the set of states $ S $) and the standard basis of an $n$-dimensional real vector space. Then, any affine state is represented as an $n$-dimensional real vector such that the summation of all entries is equal to 1. For a given input $w \in \Sigma^*$, $A$ starts its computation in the initial affine state $ v_0 $, where the $I$-th entry is 1 and the others are zeros. Then, it reads $ \tildew $ symbol by symbol from the left to the right and for each symbol the affine state is changed as follows:
\[
	v_j = M_{{\tilde w}[j]} v_{j-1},
\]
where $ 1 \leq j \leq |\tildew| $. To be a well-formed machine, the summation of entries of each column of $ M_\sigma $ must be 1. The final state is denoted as $v_f = v_{|\tildew|}$. At the end, the weighting operator\footnote{This operator returns the weight of each value in the $l_1$ norm of the vector.} returns the probability of observing each state as
\[
	Pr[\mbox{observing } s_i] = \frac{| v_f[i] | }{|v_f|},
\] 
where $ 1 \leq i \leq |S| $, $ v_f[i] $ is the $ i $-th entry of $ v_f $, and $ |v_f| $ is $l_1$ norm of $ v_f $. Thus, the input $w$ is accepted by $ A $ with probability
\[
	f_A(w) = \sum_{s_i \in S_a} \frac{| v_f[i] | }{|v_f|}.
\]

Next, we define a new affine model. 
A (realtime) affine counter automaton (AfCA) $A$ is an AfA augmented with a counter. Formally, it is a 5-tuple
\[
	A = (S,\Sigma,\delta,s_I,S_a),
\]
where the difference from an AfA \[
	\delta: S \times \tildesigma \times \{Z,NZ\} \times S \times \{-1,0,+1\} \rightarrow {\mathbb R} 
\] 
is the transition function governing the behavior of $A$ such that when it is in the state $ s \in S $, reads the symbol $ \sigma \in \tildesigma $, and the current status of the counter is $ \theta \in \{Z,NZ\} $ ($Z:$ zero, $NZ:$ nonzero), it makes the following transition with value $ \delta(s,\sigma,\theta,s',d) $: it switches to the state $ s' \in S $ and updates the value of the counter by $ d \in \{ -1,0,+1 \} $. To be a well-formed affine machine, the transition function must satisfy that, for each triple $ (s,\sigma,\theta) \in S \times \tildesigma \times \{Z, NZ\} $,
\[
	\sum\limits_{s' \in S, d\in \{-1,0,+1\}} \delta(s,\sigma,\theta,s',d) = 1.
\]
Remark that the value of the counter can be updated by a value in $ \{-t,\ldots,+t\} $ for some $ t>1 $ but this does not change the computational power of the model (see \cite{YFSA12A} for more details). 

Any classical configuration of $ A $ is formed by a pair $ (s,c) \in S \times \mathbb{Z} $, where $ s $ is the deterministic state and $ c  $ is the value of the counter. Let $w \in \Sigma^*$ be the given input and $ m = | \cent w \dollar | $. Since all possible values of the counter are in $ \{ -m,\ldots,m \} $, the total number of classical configurations is $ N = m |S| $. We denote the set $ \{ (s,c) \mid s \in S, c \in \{-m,\ldots,m\} \} $ for $ w $ as $\mathcal{C}^w$. In a similar way to AfAs, the automaton $ A $ reads $ \cent w \dollar $ symbol by symbol from the left to the right and $ A $ operates on the classical configurations. Each such configuration, say $ (s,c) $, can be seen as the state of an affine system which we represent as $ \aket{s,c} $ (a  vector in the standard basis of $ \mathbb{R}^N $). During the computation, similarly to quantum models, $ A $ can be in more than one classical configuration with some values, i.e.
\[
 v =\sum_{(s,c) \in \mathcal{C}^w} \alpha_{s,c} \aket{s,c} \mbox{ satisfying that } \sum_{(s,c) \in \mathcal{C}^w} \alpha_{s,c} = 1.
\]
Due to their simplicity, we use such linear combinations to trace the computation of an AfCA. Then we can also define the affine transition matrix $ M_\sigma $ for each $ \sigma \in \tilde{\Sigma} $ as follows:
\[
	M_\sigma\aket{s,c} = \sum_{s'\in S, d\in \{-1, 0, +1\}} \delta(s, \sigma, \theta(c), s', d)\aket{s',c+d}, 
\]
where $ \theta(c) = Z$ if $ c = 0 $, and $ \theta(c) = NZ $ otherwise.
At the beginning of computation, $A$ is in $ v_0 = \aket{s_I,0} $. Then, after reading each symbol, the affine state of the machine is updated, i.e. 
\[
	v_0 \rightarrow v_1 \rightarrow \cdots \rightarrow v_{m}, \mbox{ where } v_{i+1} = M_{\tilde{w}[i+1]}v_i \mbox{ } (0\leq i \leq m-1).
\]
After reading the whole input, the final affine state becomes
\[
	v_f = v_m =  \sum_{(s,c) \in \mathcal{C}^w} \beta_{s,c} \aket{s,c},
\]
and then the weighting operator is applied and the input is accepted with probability
\[
	f_A(w) = \sum_{s \in S_a, c \in \{-m,\ldots, m\}} \frac{|\beta_{s,c}|}{|v_f|},
\] 
which is the total weight of ``accepting'' configurations out of all configurations at the end.

We can extend AfCAs to have multiple counters (affine $k$-counter automata (AfkCAs)) in a straightforward way; the transition function is extended to $ \delta: S\times \tilde{\Sigma} \times \{Z,NZ\}^k \times S \times \{-1,0,+1\}^k \longrightarrow {\mathbb R} $. 

A (realtime) Las Vegas automaton is obtained from a standard one by splitting the set of states into three: the set of accepting, rejecting, and neutral states. When it enters one of them at the end of the computation, then the answers of ``accepting'', ``rejecting'', and ''don't know'' are given, respectively. 

A (realtime) automaton with restart \cite{YS10B} is similar to a Las Vegas automaton, the set of states of it is split into ``accepting'', ``rejecting'', and ``restarting'' states. At the end of the computation, if the automaton enters a restarting state, then all the computation is restarted from the beginning. An automaton with restart can be seen as a restricted sweeping two-way automaton. The overall accepting probability can be simply obtained by making a normalization over the accepting and rejecting probabilities in a single round (see also \cite{YS13A}).  

If an affine automaton is restricted to use only non-negative values as an entry of its transition matrix, then it becomes a probabilistic automaton. As a further restriction, if only 1 and 0 are allowed to be used, then it becomes a deterministic automaton. Thus, any (realtime) AfCA using only 0 and 1 as transition values is a (realtime) deterministic counter automaton (realtime DCA).  

All the models mentioned above are realtime models, whose tape head moves to the right at each step. Next, we introduce a one-way model, whose tape head is allowed to move to the right or stay at the same position, but not allowed to move to the left.

A one-way deterministic pushdown automaton (1DPA) $ A $ is a 7-tuple
\[ A=(S, \Sigma, \Gamma, \delta, s_I, Z_0, S_a), \]
where $ S=\{s_1,\ldots, s_n\} $ is a finite set of states, $ \Sigma $ is a finite set of input symbols, $ \Gamma $ is a finite set of stack symbols, $ \delta : S\times \tilde{\Sigma} \times \Gamma \times S \times \Gamma^* \times \{0,1\} \longrightarrow \{0,1\} $ is a transition function, $ s_I $ is the initial state, $ Z_0 $ is the initial stack symbol, and $ S_a\subseteq S $ is the set of accepting states. To be a well-formed machine, the transition function must satisfy that, for each triple $ (s, \sigma, \gamma) $, 
\[
\sum_{s'\in S, \gamma'\in \Gamma^*, D\in\{0,1\}} \delta(s, \sigma, \gamma, s', \gamma', D) = 1.
\] 

For a given input $ \cent w \dollar $, the automaton $ A $ starts its computation with the following initial configuration:
\begin{itemize}
	\item the initial state is $ s_I $,
	\item the stack has only the initial stack symbol $ Z_0 $,
	\item the tape head points to the left endmarker.
\end{itemize}
Then, at each step of the computation, $ A $ is updated according to the transition function $ \delta $, i.e., $ \delta(s, \sigma, \allowbreak \gamma,  s', \bar{\gamma'}, D) =1 $ implies that if the current state is $ s $, the scanned input symbol is $ \sigma $ and the stack-top symbol is $ \gamma $, then it moves to the state $ s' $ and updates the stack by deleting the stack-top symbol and pushing $ \bar{\gamma'} $. Also, the tape head moves to $D$ where $ D=0 $ means ``stationary'' and $ D=1 $ means ``move to the right''. Note that a move with $ D=0 $ is called an $\varepsilon$-move.
For an input word $ w $, if $ A $ reaches an accepting state, then $ w $ is accepted.


A promise problem ${\tt P} \subseteq \Sigma^* $ is formed by two disjoint subsets: yes-instances $ \tt P_{yes} $ and no-instances $ \tt P_{no} $. An automaton $ A $ solves $ P $ with error bound $ \epsilon < \frac{1}{2} $ if each yes-instance (resp. no-instance) is accepted (resp. rejected) with probability at least $ 1 - \epsilon $. If all yes-instances (resp. no-instances) are accepted with probability 1 (resp. 0), then the error bound is called one-sided. If $ \epsilon = 0 $, then the problem is said to be solved exactly (or with zero-error). A promise problem is solved by a Las Vegas algorithm with success probability $ p<1 $ if any yes-instance (resp. no-instance) is accepted (resp. rejected) with probability $ p' \geq p $ and the answer of ``don't know'' is given with the remaining probability $ 1 - p' $. If $ \mathtt{P_{yes}} \cup \mathtt{P_{no}} = \Sigma^* $, then it is called language recognition (for $ \tt P_{yes} $) instead of solving a promise problem.

\section{Exact separation}

We start with defining a new language $ \ttend $:
\[
 \ttend = \{ w \in \{0,1,2\}^*2\{0,1,2\}^* \mid w^r[{|w|_2}] = 1 \},
\]
where $ |w|_2 $ is the number of symbols $2$ in $ w $, $ w^r $ is the reverse of $w$, and $ w^r[{|w|_2}] $ is the $ (|w|_2) $-th symbol of $ w^r $.

\begin{theorem}
	The language $ \ttend $ is recognized by an AfCA $ A $ exactly.
\end{theorem}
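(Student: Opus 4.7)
The plan is to construct a realtime AfCA $A$ with two working states $q_0$ (main) and $q_1$ (verification), together with accepting and rejecting sink states that are entered only on the right end-marker. The counter is used with two different meanings depending on the current state: in $q_0$ it tallies the number of $2$'s read so far, and in $q_1$ it measures how far the current spawn position is from the special position $|w|-|w|_2+1$. On $\cent$ the machine starts in $q_0$ with counter $0$. In $q_0$, the symbol $0$ is a no-op, a $2$ increments the counter, and a $1$ performs an affine split $q_0 \mapsto a\, q_0 + (1-a)\, q_1$, in which the $q_0$-part keeps its counter while the new $q_1$-part inherits the counter shifted by $-1$. In $q_1$, the symbols $0$ and $1$ each decrement the counter by $1$ and the symbol $2$ leaves the counter unchanged.

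A direct calculation then shows that the $q_1$-branch spawned at position $i$ (where necessarily $w[i]=1$) arrives at $\dollar$ with counter value $|w|_2 + i - |w| - 1$, which is $0$ exactly when $i = |w| - |w|_2 + 1$, that is, exactly when $w[i]$ is the symbol whose value $\ttend$ asks us to check. At $\dollar$ I would then use counter-dependent affine transitions on each of the four configurations $(q_0,Z)$, $(q_0,NZ)$, $(q_1,Z)$, $(q_1,NZ)$ to route amplitude into accept and reject sinks. The final accepting and rejecting amplitudes are explicit linear combinations of a few aggregate quantities---essentially the main-branch amplitude, the total amplitude on $q_1$-branches with a nonzero counter, and the amplitude of the unique $q_1$-branch whose counter is zero (if it exists); one would then choose $a$ together with the four end-marker coefficients so that the rejecting amplitude vanishes identically on $\ttend$ and the accepting amplitude vanishes identically on its complement, giving the required $f_A(w)\in\{0,1\}$.

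The main obstacle is to make both vanishing conditions hold simultaneously, because the sum of amplitudes of the ``wrong'' verification branches depends on the index $l^*$ of the special $1$ among all $1$'s of $w$, and this index is not visible to the end-marker transitions. I would resolve this either by adding one or two extra working states---for instance a second verification state spawned in parallel with $q_1$ but with opposite sign, so that the $l^*$-dependent part of the wrong-branch sum cancels---or by exploiting the fact that the main-branch counter value at $\dollar$ equals $|w|_2$ as an additional discriminant through the $Z/NZ$ check. Once a consistent set of coefficients is in hand, I would verify correctness by induction on prefixes, using the invariant that after reading $w[1..j]$ the affine state has a normal form depending only on $|w[1..j]|_1$ and $|w[1..j]|_2$, from which the vanishing conditions at $\dollar$ follow by direct substitution, and I would check the base cases on short inputs such as $21$, $121$, $212$, and $221$.
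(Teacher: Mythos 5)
Your counter bookkeeping is sound and matches the paper's: a verification branch spawned at position $i$ reaches the end-marker with counter $0$ exactly when $i=|w|-|w|_2+1$, which is the position $\ttend$ asks about. The gap is in the exactness mechanism, and it is exactly the obstacle you flag yourself: with a single verification state $q_1$ and spawning weights $a^{l-1}(1-a)$, the construction cannot be completed. For a member input, exact acceptance forces the total weight outside the accepting sink to be zero, so every ``wrong'' $q_1$-branch must be annihilated at $\dollar$. But these branches sit at pairwise distinct counter values (and the end-marker move changes the counter by at most one), so they cannot cancel one another or anything else; and no affine transition, having column sums equal to $1$, can map a single branch of nonzero value to the zero vector. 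Your second proposed repair (using the $Z/NZ$ status of the main branch) does not help, since it carries no information about the index $l^*$ of the special $1$ among the $1$'s. Your first proposed repair is the missing idea, but it is left unexecuted, and it is precisely what the paper does: each spawn is a zero-sum \emph{pair}, $\mp\frac{1}{2}\aket{p_3,\cdot}\pm\frac{1}{2}\aket{p_4,\cdot}$, emitted from a spawner state whose own value stays $1$ (the pair sums to zero, so the column still sums to one — this also removes your geometric $a^{l-1}(1-a)$ decay, which would otherwise make the main-branch weight input-dependent). Moreover, the paper spawns on \emph{every} symbol, encoding the symbol read in the sign of the pair, rather than only on $1$'s.

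With that structure the end-marker step is forced and clean: on nonzero counter both $p_3$ and $p_4$ merge into one state, so every wrong zero-sum pair vanishes identically; on zero counter the pair is left alone and interferes with a fresh $\frac{1}{2}\aket{p_3,0}+\frac{1}{2}\aket{p_4,0}$ contribution coming from a state of constant value $1$, so the final vector is exactly $\aket{p_3,0}$ if the special symbol is $1$ and exactly $\aket{p_4,0}$ otherwise, giving acceptance probability $1$ or $0$ with no tuning of parameters. (A separate two-state deterministic component enforces the requirement that $w$ contain at least one $2$.) So your write-up correctly locates the special position via the counter, but the cancellation argument that makes the machine exact — the opposite-sign paired branches, constant-value spawner, and the merge-versus-interfere behaviour at $\dollar$ — is asserted as a possible fix rather than carried out, and the version you actually specify (single $q_1$, split $a,1-a$) provably cannot achieve zero error.
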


\begin{proof}
%
We will use two states $ (s_1,s_2) $ for deterministic computation and five states ($ p_0,p_1,p_2,p_3,p_4 $) for affine computation. The initial states are $ s_1 $ and $p_0 $. In other words, we consider a product of a 2-state deterministic finite automaton and a 5-state affine counter automaton.

The classical part is responsible for checking whether the given input has at least one symbol 2. For this purpose, $ s_1 $ switches to $ s_2 $ after reading a symbol $2 $ and then never leaves $ s_2 $ until the end of the computation. If the automaton ends in state $ s_1 $, the input is rejected.

From now on, we focus on the affine transitions. 
The computation starts in the following affine configuration:
\[
	\aket{p_0,0},
\]
where $ p_0 $ is the affine state and $ 0 $ represents the counter value. 

After reading the left end-marker, $ p_0 $ goes to $ p_0 $, $ p_1 $, and $ p_2 $ with the values  $1$, $1 $, and $ -1 $, respectively, without changing the counter value. Then, the affine state becomes
\[
	\aket{p_0,0}+\aket{p_1,0}-\aket{p_2,0}.
\]

We list the all transitions until reading the right end-marker below, in which $ c $ can be any integer representing the counter value. Remark that the counter status is never checked in these transitions.
\begin{itemize}
	\item When reading a symbol 0:
		\begin{itemize}
			\item $ \aket{p_0,c} \rightarrow \aket{p_0,c} $
			\item $ \aket{p_1,c} \rightarrow \aket{p_1,c} - \frac{1}{2} \aket{p_3,c+1} + \frac{1}{2} \aket{p_4,c+1} $
			\item $ \aket{p_2,c} \rightarrow \aket{p_2,c} $
			\item $ \aket{p_3,c} \rightarrow \aket{p_3,c+1} $
			\item $ \aket{p_4,c} \rightarrow \aket{p_4,c+1} $
		\end{itemize}
	\item When reading a symbol 1:
		\begin{itemize}
			\item $ \aket{p_0,c} \rightarrow \aket{p_0,c} $
			\item $ \aket{p_1,c} \rightarrow \aket{p_1,c} + \frac{1}{2} \aket{p_3,c+1} - \frac{1}{2} \aket{p_4,c+1} $
			\item $ \aket{p_2,c} \rightarrow \aket{p_2,c} $
			\item $ \aket{p_3,c} \rightarrow \aket{p_3,c+1} $
			\item $ \aket{p_4,c} \rightarrow \aket{p_4,c+1} $
		\end{itemize}
	\item When reading a symbol 2:
		\begin{itemize}
			\item $ \aket{p_0,c} \rightarrow \aket{p_0,c} $
			\item $ \aket{p_1,c} \rightarrow \aket{p_1,c-1} - \frac{1}{2} \aket{p_3,c} + \frac{1}{2} \aket{p_4,c} $
			\item $ \aket{p_2,c} \rightarrow \aket{p_2,c-1} $
			\item $ \aket{p_3,c} \rightarrow \aket{p_3,c} $
			\item $ \aket{p_4,c} \rightarrow \aket{p_4,c} $
		\end{itemize}
\end{itemize} 
Let $ w $ be the input, let $ n=|w| $, $ x = w^r $, and $ |w|_2 = k \geq 1 $. The affine state before reading the right end-marker is
\[
	\aket{p_0,0} + \aket{p_1,-k} - \aket{p_2,-k} + \sum_{i=1}^{|x|=n} (-1)^{x[i]} \mypar{ -\frac{1}{2} \aket{p_3,i-k} + \frac{1}{2} \aket{p_4,i-k} }.
\] 
Here the first three terms are trivial since (i) $ \aket{p_0,0} $ never leaves itself, and, (ii) the values of $ p_1 $ and $ p_2 $ are never changed and the counter value is   decreased for each symbol 2 ($ k $ times in total). 

In order to verify the last term, we closely look into the step when reading an arbitrary input symbol, say $ w[j] $ ($ 1 \leq j \leq n $). Suppose that $ t \in \{0,\ldots,j-1 \} $ symbols 2 have been read until now. We calculate the final counter values of the configurations with states $ p_3 $ and $ p_4 $ that are created from $ \aket{p_1,-t} $ in this step. 
\begin{itemize}
\renewcommand\labelitemi{$\bullet$}
\item If $ w[j] $ is $ 0 $ or $ 1 $, then the following configurations are created:
\[
	-\frac{1}{2} \aket{p_3,-t+1} + \frac{1}{2} \aket{p_4,-t+1} \mbox{ or } \frac{1}{2} \aket{p_3,-t+1} - \frac{1}{2} \aket{p_4,-t+1},
\]
respectively. In the remaining part of the computation, $ k-t $ symbols 2 and $ n-j-(k-t) $ symbols 0 or 1 are read. When reading a symbol 2, the counter value remains the same and it is increased by 1 when a symbol 0 or 1 is read. Thus, their final counter values hit $ -t+1+n-j-(k-t) = (n-j+1)-k $.
\item If $ w[j] $ is 2, then the following configuration is created:
\[
	-\frac{1}{2} \aket{p_3,-t} + \frac{1}{2} \aket{p_4,-t}.
\]
In the remaining part of the computation, $ k-t-1 $ symbols 2 and $ n-j-(k-t-1) $ symbols 0 or 1 are read. Then, as explained in the previous item, their final counter values hit $ -t+n-j-(k-t-1) = (n-j+1)-k $.
\end{itemize}
It is clear that $ n-j+1 $ refers to $ i $ in the above equation, i.e. $ x[i] = x[n-j+1] = w[j] $, and so the correctness of this equation is verified. Moreover, we can follow that the counter value for these configurations is zero if and only if $ i = n-j+1 = k $, and, this refers to the input symbol $ x[|w|_2] = (w^r)[|w|_2] $. 

Therefore, if we can determine the value of $ w^r[|w|_2] $ with zero error, then we can also determine whether the given input is in the language or not with zero error. For this purpose, we use the following transitions on the right end-marker in which the value of the counter is not changed:
\begin{itemize}
\item Both states $ p_1 $ and $ p_2 $ switch to $ p_1 $. Thus,  the pair $  \aket{p_1,-k} -  \aket{p_2,-k} $ disappears.
\item If the value of the counter is non-zero, both states $ p_3 $ and $ p_4 $ switch to $ p_3 $. Thus, each pair of the form $ (-1)^{x[i]} \mypar{ \frac{1}{2} \aket{p_1,\neq 0} - \frac{1}{2} \aket{p_2,\neq 0} } $ disappears ($ i \neq k $).
\item If the value of the counter is zero, both states $ p_3 $ and $ p_4 $ switch to themselves. Moreover, the state $ p_0 $ switches to $ p_3 $ and $ p_4 $ with values of $ \frac{1}{2} $. Then, the following interference appears:
\[
	v_f =
	(-1)^{x[i]} \mypar{ -\frac{1}{2} \aket{p_3,0} + \frac{1}{2} \aket{p_4,0} }
	+
	\frac{1}{2} \aket{p_3,0} + \frac{1}{2} \aket{p_4,0}.
\]
\end{itemize}
If $ x[i] = 1 $, then $ v_f = \aket{p_3,0} $. If $ x[i] = 0 $ or $ x[i] =2 $, then $ v_f = \aket{p_4,0} $. Thus, by setting $ (s_2,p_3) $ as the only accepting state, we can obtain the desired machine.
 \end{proof}

Next, we prove that the language $ \ttend $ is recognized neither by 1-way deterministic pushdown automata (1DPAs) nor by realtime deterministic $k$-counter automata (realtime DkCAs)\footnote{Since 1-way (with epsilon moves) deterministic 2-counter automata can simulate Turing machines, the restriction of ``realtime'' is essential.}. For this purpose, we introduce the following lemma (the pumping lemma for deterministic context-free languages (DCFLs)) \cite{Yu89}.
\begin{lemma}
	\label{lemma:pumping}
	(Pumping Lemma for DCFLs \cite{Yu89})
	Let $ L $ be a DCFL. Then there exists a constant $ C $ for $ L $ such that for any pair of words $ w $, $ w' \in L $ if
	\begin{enumerate}
		\renewcommand{\labelenumi}{(\arabic{enumi})}
		\item $ w = xy $ and $ w' = xz, |x| > C $, and
		\item $ ^{(1)}y = \,^{(1)}z $, where $ ^{(1)}w $ is defined to be the first symbol of $ w $
		\[
		^{(1)}w = \left \{ 
		\begin{array}{ll}
		x & \mbox{ if } |w| > 1, w=xy, \mbox{ and } |x| = 1;\\
		w & \mbox{ if } |w| \leq 1,
		\end{array}\right .
		\]
	\end{enumerate}
	then either (3) or (4) is true:
	\begin{enumerate}\setcounter{enumi}{2}
		\renewcommand{\labelenumi}{(\arabic{enumi})}
		\item there is a factorization $ x=x_1 x_2 x_3 x_4 x_5, |x_2 x_4| \geq 1 $ and $ |x_2 x_3 x_4| \leq C $, such that for all $ i \geq 0 $ $ x_1 x_2^i x_3 x_4^i x_5 y $ and $ x_1 x_2^i x_3 x_4^i x_5 z $ are in $ L $;
		\item there exist factorizations $ x = x_1 x_2 x_3 $, $ y = y_1 y_2 y_3 $ and $ z = z_1 z_2 z_3 $, $ |x_2| \geq 1 $ and $ |x_2 x_3| \leq C $, such that for all $ i \geq 0 $ $ x_1 x_2^i x_3 y_1 y_2^i y_3 $ and $ x_1 x_2^i x_3 z_1 z_2^i z_3 $ are in $ L $. \qed
	\end{enumerate}
\end{lemma}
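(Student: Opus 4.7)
The plan is to fix a DPDA $M = (Q, \Sigma, \Gamma, \delta, q_0, Z_0, F)$ recognizing $L$ and take the constant $C$ to be a suitable polynomial in $|Q|$ and $|\Gamma|$, large enough to make the pigeonhole arguments below go through (roughly $C = |Q|^2|\Gamma|^2 + 1$ is enough). The central object I track, as $M$ reads the common prefix $x$, is the stack-height profile together with the state and the stack-top symbol after each step. The two runs on $w = xy$ and $w' = xz$ agree throughout $x$ by determinism, and the hypothesis $^{(1)}y = {}^{(1)}z$ forces them to take the same transition on the first symbol past $x$ as well, which is the crucial hook for coupling the two computations.

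I would then split on the shape of the stack-height profile during the reading of $x$. In Case (A), some fixed stack height is revisited at least $|Q||\Gamma|+1$ times while consuming $x$. Pigeonholing on (state, stack-top) at those moments yields two positions at which the configuration agrees, so the substring of $x$ between them is a balanced push/pop cycle that can be iterated without altering the configuration at the end of $x$; this gives conclusion (3), with $x_2,x_4$ lying entirely inside $x$ and $|x_2 x_3 x_4| \leq C$ by construction. In Case (B), the stack instead accumulates many symbols that persist on the stack throughout the rest of the computation on $xy$ (and, by the coupling, on $xz$). Pigeonholing (state, stack-top) at the push events of these persistent symbols locates two matching pushes inside $x$; the substring between them can be iterated, each iteration inserting an extra copy of the pushed block near the top of the stack at the end of $x$. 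Those extra symbols are then popped by corresponding extra factors inside both $y$ and $z$, yielding the triple factorization $x = x_1 x_2 x_3$, $y = y_1 y_2 y_3$, $z = z_1 z_2 z_3$ demanded by conclusion (4).

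The main obstacle is Case (B): I have to argue that the extra stack symbols introduced by iterating $x_2$ are popped by the \emph{same} middle blocks $y_2$ and $z_2$ across the two runs. This is exactly where $^{(1)}y = {}^{(1)}z$ together with the determinism of $M$ is essential --- while $M$ is still busy processing stack symbols that were pushed during $x$, its transitions depend only on those stack contents and on the first tail symbol, which is common to both inputs, so the pop structure during $y$ and during $z$ can be analyzed in parallel and aligned. Once the matching configuration pairs in each case are in hand, the remaining bookkeeping --- maintaining $|x_2 x_4| \geq 1$, $|x_2 x_3 x_4| \leq C$ in Case (A), $|x_2| \geq 1$, $|x_2 x_3| \leq C$ in Case (B), and checking by induction on $i$ that the pumped words still reach an accepting configuration --- is routine.
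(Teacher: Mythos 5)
You should first note that the paper contains no proof of this statement: Lemma~\ref{lemma:pumping} is quoted verbatim from Yu \cite{Yu89} and used as a black box, so there is no in-paper argument to compare against; your sketch has to stand against the (rather intricate) published proof. Your general plan --- fix a DPDA for $L$, couple the two runs on the common prefix $x$ via determinism, split on the shape of the stack-height profile, and pigeonhole on $(\text{state},\text{stack-top})$ surface configurations --- is the right family of argument, but as written it does not deliver the conclusions actually asserted in (3) and (4), and the places where it falls short are exactly the delicate points of the lemma.

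Concretely: (a) In your Case (A), two visits to the same stack height with equal state and stack-top do \emph{not} permit iterating the intermediate segment unless the stack never dips below that level in between; if it dips, the second traversal may consult stack content that the first traversal rewrote, so you must pigeonhole over matched push/pop levels, not over raw height revisits. Moreover, the two matching positions produced by your pigeonhole may be arbitrarily far apart, so the bound $|x_2x_3x_4|\leq C$ is not ``by construction''; it requires localizing the repetition to a bounded window. (b) In Case (B) the problem is worse: conclusion (4) demands $|x_2x_3|\leq C$, i.e.\ the pumped factor $x_2$ must lie in the last $C$ symbols of $x$, whereas your pigeonhole over all persistent push events can return a repetition deep inside $x$ (e.g.\ inside the $2^p$-prefix in the words used in Theorem~\ref{theorem:impossibilityDPA}); the real proof works with a bounded suffix of $x$ / the topmost bounded portion of the surviving stack precisely to obtain this, and the paper's application relies on that bound (``$x_2$ can have only 0s''). (c) Your treatment of the main obstacle is off target: once the machine reads past the first tail symbol, the two runs process different inputs, so the pop structures during $y$ and $z$ cannot be ``aligned'' --- and they need not be, since (4) allows $y_2\neq z_2$ and even empty $y_2$ or $z_2$. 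What actually has to be shown is that one single factorization of $x$ works for both continuations simultaneously (a pigeonhole taken jointly over both runs), with the hypothesis $^{(1)}y={}^{(1)}z$ entering to control the steps where the machine is still consuming stack built during $x$ while reading that first common symbol; your sketch conflates this with an alignment of the two tails. Finally, the ``routine'' induction must also cover $i=0$, where stack material is removed rather than duplicated. So the proposal is a sensible outline of the standard approach, but the stated pigeonholes do not yield the length constraints or the simultaneity over $y$ and $z$, and filling these gaps is essentially the content of Yu's proof.
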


\begin{theorem}
	\label{theorem:impossibilityDPA}
	The language $ \ttend $ cannot be recognized by any 1DPA.
\end{theorem}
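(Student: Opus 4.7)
I would argue by contradiction using the pumping lemma for DCFLs (Lemma~\ref{lemma:pumping}), since every 1DPA-language is a DCFL. Let $C$ be the lemma's constant. The plan is to use the witnesses
\[
  w = 2^{C}\, 1\, 0^{C-1}\, 2, \qquad w' = 2^{C}\, 1\, 0^{C-1}\, 2\, 1\, 0^{C},
\]
both of which lie in $\ttend$: $w$ has $C+1$ twos and length $2C+1$, so its crucial index $|w|-k+1 = C+1$ hits the (only) $1$; and $w'$ has $C+1$ twos and length $3C+2$, so its crucial index $2C+2$ hits the second $1$. Taking $x = 2^{C}\, 1\, 0^{C-1}$, $y = 2$, $z = 2\, 1\, 0^{C}$, we get $w = xy$, $w' = xz$, $|x| = 2C > C$, and both remainders begin with $2$, so conditions (1)--(2) of the lemma apply.

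I would then rule out conclusion (3) by a case analysis on the placement of $x_2 x_3 x_4$ inside $x$. If the factor lies inside $2^{C}$, pumping at $i = 2$ stretches the $2$-block so that the $1$ of $w$ slides right while the crucial index $C+1$ stays fixed, putting a $2$ at the crucial position. If it lies inside $0^{C-1}$, pumping at $i = 2$ instead stretches the $0$-block, so the crucial index shifts rightward into the $0$s while the $1$ stays put. If $x_2 x_3 x_4$ covers the $1$, either that $1$ lies in $x_2$ or $x_4$ and pumping at $i = 0$ deletes the only $1$ from the prefix, making the pumped $w$ $1$-free since $y = 2$ offers no replacement; or the $1$ lies in $x_3$ with $|x_2| = \ell_1$ in the $2$-region and $|x_4| = \ell_2$ in the $0$-region. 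When $\ell_1 \neq \ell_2$, pumping at $i = 2$ moves the $1$ by $\ell_1$ but the crucial index by $\ell_2$, so they fall apart. For the diagonal case $\ell_1 = \ell_2 = \ell \geq 1$, pumping preserves $w$; here the second witness comes into play, since pumping $w'$ at $i = 0$ yields $2^{C-\ell}\, 1\, 0^{C-1-\ell}\, 2\, 1\, 0^{C}$, whose second $1$ sits at index $2C-2\ell+2$ while the crucial index becomes $2C-\ell+2$, landing $\ell$ positions later inside the trailing $0^{C}$.

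Conclusion (4) is simpler. The suffix $x_2 x_3$ has length at most $C$, so it lies inside $1\, 0^{C-1}$. If $x_2$ is made of $0$s, then because $y = 2$ forces $y_2 \in \{\varepsilon, 2\}$, pumping adds $0$s to the prefix but at most $2$s to the suffix; for $i = 2$ the crucial index of the pumped $w$ slides into the $0$-run and membership fails. If $x_2$ begins with the $1$, taking $i = 0$ deletes the only $1$ from $x$, and since $y$ has no $1$, the pumped $w$ has no $1$ at all and cannot meet the defining condition of $\ttend$.

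Every admissible factorisation therefore yields some $i \in \{0, 2\}$ whose pumped string falls outside $\ttend$, contradicting the lemma, so $\ttend$ is not a DCFL. The main subtlety I foresee is precisely the diagonal case $\ell_1 = \ell_2$ in (3), in which $w$ alone is pump-invariant; introducing the second witness $w'$ with an additional $1$-block is what breaks this symmetry, since its second $1$ shifts at a different rate from the crucial index under the same pumping.
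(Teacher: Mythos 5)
Your proof is correct and follows essentially the same strategy as the paper: apply the DCFL pumping lemma (Lemma~\ref{lemma:pumping}) to a pair of witnesses $xy, xz \in \ttend$ with matching tail-initial symbols, rule out conclusion (3) by locating $x_2x_3x_4$ relative to the single $1$ (using the second witness to break the diagonal case $\ell_1=\ell_2$), and rule out (4) by noting the pumped suffix lies in the $0$-run. The only difference is the choice of witnesses (the paper uses $w=2^p10^{p-1}$, $w'=2^p10^{p-1}10^{p-1}$ with $y=0$, $z=010^{p-1}$), which is immaterial.
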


\begin{proof}
	We assume that $ \ttend $ is a DCFL and let $ C $ be the constant for $ \ttend $ in Lemma~\ref{lemma:pumping}. Choose $ w = 2^p 1 0^{p-1} \in \tt END $ and $ w' = 2^p 1 0^{p-1} 1 0^{p-1} $ for some integer $ p>C+1 $, and set $ x = 2^p 1 0^{p-2}, y = 0 $ and $ z = 0 1 0^{p-1} $. Then, $ w = xy $ and $ w' = xz $ satisfy (1) and (2) in Lemma~\ref{lemma:pumping}.
	
	We first consider the case that (3) holds. In order to satisfy $ x_1 x_2^i x_3 x_4^i x_5 y \in \ttend $ for $ i\geq 0 $, $ x_2 x_4 $ must not have the symbol 1 (otherwise, $ x_1 x_2^0 x_3 x_4^0 x_5 y \not\in \ttend $). Thus, $ x_2 $ and $ x_4 $ are of the form $ 2^t $ or $ 0^t $ for some constant $ t $. If $ x_2 = 2^{t_1} $ and $ x_4 = 2^{t_2} $,  $ x_1 x_2^i x_3 x_4^i x_5 y \not\in \ttend $ for $ i\neq 1$. Similarly, $ x_2 = 0^{t_1} $ and $ x_4 = 0^{t_2} $ cannot occur. Thus, the only possible choice is $ x_2 = 2^{t_1} $ and $ x_4 = 0^{t_2} $ for some $ t_1 $ and $ t_2 $. In order to satisfy $x_1 x_2^i x_3 x_4^i x_5 y \in \ttend $ for $ i\geq 0 $, $ t_1 = t_2$ must hold. However, This causes $ x_1 x_2^i x_3 x_4^i x_5 z \not\in \ttend $ for $ i\neq 1 $. Thus, (3) does not hold.
	
	Next, we consider the case that (4) holds. Since $ |x_2 x_3 | \leq C $, $ x_2 $ can have only 0s. Thus, for any factorization $ w = x_1 x_2 x_3 y_1 y_2 y_3 $, $ x_1 x_2^i x_3 y_1 y_2^i y_3 \not\in \ttend $ for $ i \neq 1 $. Thus, (4) does not hold. This is a contradiction. Therefore, $ \ttend $ is not a DCFL, which implies no 1DPA can recognize $ \ttend $. \end{proof}


\begin{theorem}
	\label{theorem:impossibilityDkCA}
	The language $ \ttend $ cannot be recognized by any realtime DkCA.
\end{theorem}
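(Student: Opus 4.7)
The plan is to derive a contradiction by a Myhill--Nerode--style counting argument: I will exhibit exponentially many prefixes of a common length that $M$ must separate into pairwise distinct configurations, while a realtime DkCA with $k$ counters can only reach polynomially many configurations after reading any fixed-length input.

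Suppose for contradiction that a realtime DkCA $M$ with state set $S$ and $k$ counters recognizes $\ttend$. After reading any prefix of length $N$, each counter of $M$ lies in $\{-N,\ldots,N\}$, since every step changes a counter by at most $1$. Hence the number of configurations reachable at that point is bounded by $|S|\cdot(2N+1)^k$, a polynomial in $N$.

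Fix $K\geq 1$, and for every subset $T\subseteq\{K+1,K+2,\ldots,2K\}$ define the length-$2K$ prefix
\[
 u_T \;=\; 2^K\, b_1 b_2 \cdots b_K, \qquad \text{where } b_j = 1 \text{ if } K+j\in T \text{ and } b_j = 0 \text{ otherwise.}
\]
Each $u_T$ contains exactly $K$ occurrences of $2$. Given distinct $T\neq T'$, pick any $p\in T\triangle T'$ and set $v_p = 0^{\,p-K-1}$. Then $u_T v_p$ has length $K+p-1$ and still exactly $K$ twos, so its key position is $(K+p-1)-K+1=p$; by construction the character of $u_T$ at position $p$ equals $1$ iff $p\in T$, and analogously for $u_{T'}$. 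Consequently $u_T v_p \in \ttend$ iff $p\in T$ while $u_{T'}v_p \in \ttend$ iff $p\in T'$, so the two verdicts disagree. Reading the common suffix $v_p\dollar$ from the configurations reached by $M$ after $u_T$ and after $u_{T'}$ must therefore yield opposite answers, which forces those two configurations to be distinct.

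Thus $M$ must reach at least $2^K$ pairwise distinct configurations on inputs of length $2K$, while only $|S|(4K+1)^k$ configurations are available on such inputs. Choosing $K$ large enough so that $2^K > |S|(4K+1)^k$ forces a pigeonhole collision among the $u_T$'s and yields the desired contradiction. The one step needing care is verifying that the key position of $u_T v_p$ is precisely $p$ and lies inside the $u_T$-portion of the input; this follows from $|u_T v_p| = K+p-1$, $|u_T v_p|_2 = K$, and the chosen range $p \in \{K+1,\ldots,2K\}$, after which the argument reduces to the clean counting above.
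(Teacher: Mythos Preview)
Your proof is correct and follows essentially the same approach as the paper: both use prefixes of the form $2^K x$ with $x\in\{0,1\}^K$, distinguish any two of them by appending a suffix $0^j$ so that the key position lands on a coordinate where the two $x$'s differ, and then compare the $2^K$ required configurations against the polynomial configuration bound for a realtime D$k$CA. Your version is in fact slightly more careful than the paper's, which states the configuration bound as ``$O(m)$'' rather than the correct $|S|(2N+1)^k$; the argument of course goes through either way since $2^K$ beats any polynomial.
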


\begin{proof}
	We assume that there exists a realtime DkCA that recognizes $ \ttend $. We consider an input of the form $ w = 2^m x y (x\in \{0,1\}^m, y\in \{0,1\}^*) $. Then we have $ 2^m $ possible $ x $'s. For any $ x_1 $ and $ x_2 \in \{0,1\}^m (x_1 \neq x_2) $, we will show that there exists a $ y $ such that $ 2^m x_1 y \in \ttend $ and $ 2^m x_2 y \not\in \ttend $ or vice versa.
    
	We assume that $ x_1[i] \neq x_2[i] $. Note that there exists such an $ i $ since $ x_1 \neq x_2 $. We also assume that $ x_1[i] = 1 $ and $ x_2[i] = 0 $ without loss of generality. We set $ y=0^{i-1} $. Then $ (2^m x_1 y)^R[m] = x_1[i] = 1 $ and $ (2^m x_2 y)^R[m] = x_2[i] = 0 $. Thus, $ 2^m x_1 y \in \ttend $ and $ 2^m x_2 y \not\in \ttend $. Therefore, the configurations after reading $ 2^m x_1 $ and $ 2^m x_2 $ must be different. However, the number of possible configurations for a realtime DkCA after reading the partial input $ 2^m x $ is $ O(m) $ while there are $ 2^m $ possible $ x $'s. This is a contradiction.
	\end{proof}

Currently, we do not know any QCA algorithm solving $ \ttend $. Moreover, recently another promise problem solvable by exact QCAs but not by DCAs was introduced in \cite{NY15A} and we also do not know whether AfCAs can solve this promise problem.

\section{Las Vegas algorithms}

In \cite{RasY14A}, some promise problems were given in order to show the superiority of two-way QFAs (2QCFAs) over two-way PFAs (2PFAs). We show that the same problem can be solved by realtime Las Vegas AfAs or AfAs with restart in linear expected time. 

First we review the results given in \cite{RasY14A}. Let $ \pal = \{ w \in \{1,2\}^* \mid w = w^r \} $ be the language of palindromes. Based on $ \pal $, the following promise problem is defined: $ \palnpal $ composed of
\begin{itemize}
\item $ \palnpalyes = \{ x 0 y \mid x \in \pal, y \not\in \pal \} $ and
\item $ \palnpalno \mspace{7.1mu} = \{ x 0 y \mid x \not\in \pal, y \in \pal \} $.
\end{itemize}

It was shown that $ \palnpal $ can be recognized by 2QCFAs exactly but in exponential time and bounded-error 2PFAs can recognize $ \palnpal $ only if they are allowed to use a logarithmic amount of memory. Now we show that $ \palnpal $ can be recognized by realtime Las Vegas AfAs and so also by AfAs with restart in linear expected time.

\begin{theorem}
	The promise problem $ \palnpal $ can be solved by Las Vegas AfA $ A $ with any success probability $ p < 1 $.
\end{theorem}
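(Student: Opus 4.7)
The plan is to realize the Las Vegas decision by running, in affine parallel, a palindrome-check on the prefix $x$ (before the separator $0$) and a palindrome-check on the suffix $y$ (after it). Under the promise, exactly one of $x, y$ lies in $\pal$, so we can design the automaton so that the yes/no verdict is pinned to which of the two sub-checks produces a nonzero diagnostic amplitude, while the balance of the affine mass is siphoned into a ``don't know'' state.

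First I would build a realtime affine subroutine that, while reading a string $w \in \{1,2\}^*$, carries in the amplitudes of a fixed constant number of states the quantities $e(w)$ and $e(w^r)$ (using the base-$3$ encoding already defined in the paper). Both are updated by affine rules: $e(w)$ accumulates as $v \mapsto 3v + w[i]$, and $e(w^r)$ via an auxiliary ``power-of-$3$'' state that is multiplied by $3$ at each step and then contributes $w[i]\cdot 3^{i-1}$ into the $e(w^r)$ accumulator. The difference $\alpha(w) := e(w) - e(w^r)$ vanishes iff $w \in \pal$ and satisfies $|\alpha(w)| \geq 1$ otherwise. I would run one copy of this subroutine on $x$, freeze its value upon reading the separator $0$, and launch a second copy on $y$.

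On reading the right end-marker, apply a single affine transition routing the $x$-diagnostic amplitude $\alpha(x)$ into a \emph{reject} state and the $y$-diagnostic amplitude $\alpha(y)$ into an \emph{accept} state, with the remaining affine mass (required to keep column sums equal to $1$) sent to a \emph{neutral} state. On a yes-instance $\alpha(x) = 0$, so the reject state receives amplitude exactly $0$; on a no-instance $\alpha(y) = 0$, so the accept state receives amplitude exactly $0$. Hence the machine is genuinely Las Vegas, with zero two-sided error. The success probability is then the $l_1$-weight of the correct answer state divided by the $l_1$-norm of the final affine vector.

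To push the success probability above any chosen $p < 1$, I would amplify the diagnostic amplitudes by folding a large overall scaling factor $K$ into the final transition, so that the accept/reject weight grows like $K|\alpha(\cdot)|$ while the neutral-state balancing weight is $O(1)$ in $K$. Since $|\alpha(\cdot)| \geq 1$ whenever it is nonzero, picking $K$ sufficiently large uniformly yields $f_A \geq p$; the corresponding AfA-with-restart then runs in linear expected time because a neutral outcome occurs with probability at most $1-p$ per round. The main obstacle I expect is bookkeeping rather than interference: the transition matrices must remain well-formed (each column summing to $1$) despite the unbounded growth of the encoding amplitudes, so the ``excess'' created at each step must be carefully deposited into auxiliary states whose final contributions are funneled exclusively into the neutral outcome — never into accept or reject — so as not to disturb the exact cancellation that guarantees zero error.
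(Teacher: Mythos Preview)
Your plan is essentially the paper's own proof: encode $e(\cdot)$ and $e(\cdot^r)$ in base~$3$ via affine updates, freeze $\alpha(x)=e(x)-e(x^r)$ on the separator, continue on $y$, and on $\dollar$ scale by a large constant and route $\alpha(y)$ to accepting and $\alpha(x)$ to rejecting. The paper carries this out with five states and explicit matrices, arriving at a final vector $(k\alpha(y),-k\alpha(y),k\alpha(x),-k\alpha(x),1)^T$.

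There is one concrete detail in your sketch that does not quite close. You route $K\alpha(y)$ into \emph{a} (single) accept state and $K\alpha(x)$ into \emph{a} (single) reject state, and then assert that the neutral balancing weight is $O(1)$ in $K$. With a single accept state, any affine column whose accept-row entry is $\pm K$ must dump $1\mp K$ into the remaining rows to keep the column sum equal to $1$; that residue lands in the neutral state and contributes $\Theta(K\cdot e(y))$ (or similar) to its final amplitude, so the neutral weight is \emph{not} $O(1)$ and the success probability stalls near $1/2$. The fix---and this is exactly what the paper does---is to use a \emph{pair} of accepting states receiving $+K\alpha(y)$ and $-K\alpha(y)$, and likewise a pair of rejecting states with $\pm K\alpha(x)$. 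The paired $\pm K$ entries in each column cancel in the column sum, so the neutral row carries a constant $1$ in every column and the final neutral amplitude is exactly $1$, independent of $K$. With that adjustment your argument goes through verbatim and yields success probability at least $2K/(2K+1)$.
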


\begin{proof}
	It is known that AfAs can recognize  $ \pal $ with one-sided bounded-error \cite{VilY16A,YS10B} and so we can design a Las Vegas automaton for $ \palnpal $ by using similar ideas given in \cite{RasY14A,GefY15A}.
	
	The automaton $ A $ has 5 states $ S = \{ s_1,\ldots,s_5 \} $ where $ s_1 $ and $ s_2 $ are accepting states; $ s_3 $ and $s_4$ are rejecting states; and $s_5$ is the only neutral state. After reading $ \cent $, the affine state is set to $ v_1 = (0~~0~~1~~0~~0)^T $. Remember that $ e(u) $ denotes the encoding of the string $ u \in \{1,2\}^* $ in base-3.
	
	We apply the following operators when reading symbols 1 and 2:
	\[
	M_1 = \mymatrix{rrrrr}{4 & 1 & 1 & 1 & 1 \\ 0 & 1 & 1 & 0 & 0 \\ 0 & 0 & 3 & 0 & 0 \\ 0 & 0 & 0 & 1 & 0 \\ -3 & ~ -1 & ~ -4 & ~ -1 & ~~~ 0 }
	\mbox{ and }
	M_2 = \mymatrix{rrrrr}{5 & 2 & 2 & 2 & 2 \\ 0 & 1 & 2 & 0 & 0 \\ 0 & 0 & 3 & 0 & 0 \\ 0 & 0 & 0 & 1 & 0 \\  -4 & ~ -2 & ~ -6 & ~ -2 & ~-1 }
	\] 
	that encode strings $ u $ and $ u^r $ into the values of the first and second states in base-3 after reading $u \in \{1,2\}^*$. Here the third entry helps for encoding $ u^r $, the fourth entry is irrelevant to encoding, and the fifth entry is used to make the state a well-defined affine vector. By using induction, we can show that $ M_1 $ and $ M_2 $ do the aforementioned encoding if the first three entries are respectively 0, 0, and 1.  
	
	For $ u = 1 $ or $ u = 2 $, we can have respectively
	\[
	v_{|\cent 1|} = \mymatrix{r}{1 \\ 1 \\ 3 \\ * \\ \overline{1}}
	\mbox{ and }
	v_{|\cent 2|} = \mymatrix{r}{2 \\ 2 \\ 3 \\ * \\ \overline{1}}.
	\] 
	Suppose that $ u $ is read, then we have the following affine state
	\[
	v_{|\cent u|} = \mymatrix{c}{ e(u) \\ e(u^r) \\ 3^{|u|} \\ * \\ \overline{1}}.
	\]
	By using this, we can calculate the new affine states after reading $ u1 $ and $ u2 $ as
	
	{
		\[
		v_{|\cent u1|} = \myvector{ 3e(u)+1 = e(u1) \\ e(u^r) + 3^{|u|} = e(1u^r) \\ 3^{|u1|} \\ * \\ \overline{1} }
		\mbox{ and }
		v_{|\cent u2|} = \myvector{ 3e(u)+2 = e(u2) \\ e(u^r) + 2 \cdot 3^{|u|} = e(2u^r) \\ 3^{|u2|} \\ * \\ \overline{1}},
		\]
	}
	respectively. Thus, our encoding works fine.
	
	Let $ x0y $ be the input as promised. Then, before reading the symbol 0, the affine state will be
	\[
	v_{|\cent x|} = \myvector{ e(x) \\ e(x^r) \\ 3^{|x|} \\ 0 \\ \overline{1} }. 
	\]
	For the symbol 0, we apply the following operator:
	\[
	M_0 = \mymatrix{rrrrr}{ 0 & 0 & 0 & 0 & 0 \\ 0 & 0 & 0 & 0 & 0 \\ 1 & 1 & 1 & 1 & 1 \\ 1 & ~ -1 & 0 & 0 & 0 \\ -1 & 1 & ~~~ 0 & ~~~ 0 & ~~~ 0 }
	\]
	After reading 0, the new affine state will be
	\[
	v_{|\cent x0|} = \myvector{ 0 \\ 0 \\ 1 \\ e(x) - e(x^r) \\ \overline{1} },
	\]
	where the first three entries are set to 0, 0, and 1 for encoding $ y $, and, the difference $ e(x) - e(x^r) $ is stored into the fourth entry. 
	
	Similarly to above, after reading $ y $, the affine state will be
	\[
	v_{|\cent x0y|} = \myvector{ e(y) \\ e(y^r) \\ 3^{|y|} \\ e(x) - e(x^r) \\ \overline{1} }.
	\]
	Then, the end-marker is read before the weighting operator is applied. Let $ k $ be an integer parameter. The affine operator for the symbol $ \dollar $ is 
	\[
	M_\dollar(k) = \mymatrix{rrrrr}{k & ~ - k & ~~~0 & 0 & 0 \\  -k & k & 0 & 0 & 0 \\ 0 & 0 & 0 & k & 0 \\ 0 & 0 & 0 & ~ -k & 0 \\ 1 & 1 & 1 & 1 & ~~~ 1 }
	\] 
	and so the final state will be
	\[
	v_f = \myvector{ k (e(y) - e(y^r)) \\ -k( e(y) - e(y^r) ) \\ k (e(x) - e(x^r)) \\ -k (e(x) - e(x^r)) \\ 1 }.
	\]
	
	If the input $ x0y $ is a yes-instance, then $ x \in \pal $ and $ y \in \npal $. Thus, $ e(x)-e(x^r) $ is zero and $ |e(y)-e(y^r)| $ is at least 1. In such a case, after the weighting operator, the input is accepted with probability at least $ \frac{2k}{2k+1} $  and the answer of ``don't know'' is given with probability at most $ \frac{1}{2k+1} $.
	
	If the input $ x0y $ is a no-instance, then $ x \in \npal $ and $ y \in \pal $. Thus, $ |e(x)-e(x^r)| $ is at least 1 and $ e(y)-e(y^r) $ is zero. In such a case, after the weighting operator, the input is rejected with probability at least $ \frac{2k}{2k+1} $  and the answer of ``don't know'' is given with probability at most $ \frac{1}{2k+1} $.
	By picking a sufficiency big $ k $, the success probability can be arbitrarily close to 1.
	\end{proof}

\begin{corollary}
	The promise problem $ \palnpal $ can be solved by an exact AfA with restart in linear expected time.
\end{corollary}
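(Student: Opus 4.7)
The plan is to take the Las Vegas AfA $A$ from the preceding theorem and convert it into an AfA with restart simply by relabeling its unique neutral (``don't know'') state $s_5$ as a restarting state; the accepting states $s_1, s_2$, the rejecting states $s_3, s_4$, and all transition matrices $M_1, M_2, M_0, M_\dollar(k)$ are left untouched. The result is a well-formed realtime AfA with restart processing $\cent x 0 y \dollar$ in one pass.

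For exactness, the crucial observation is that in the Las Vegas construction a yes-instance is never rejected and a no-instance is never accepted: on a yes-instance only $s_1, s_2$ and $s_5$ are reached with positive final weight, and on a no-instance only $s_3, s_4$ and $s_5$ are. After normalizing over the accept/reject probabilities in a single round, as prescribed by the definition of automaton with restart, the overall acceptance probability is exactly $1$ on yes-instances and exactly $0$ on no-instances. Hence the promise problem $\palnpal$ is solved with zero error.

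For the expected running time, fix any constant value of the parameter $k$ (for instance $k = 1$). By the previous theorem, in a single round the probability of producing a decisive (accept or reject) outcome is at least $\frac{2k}{2k+1} \geq \frac{2}{3}$, so the restart probability is at most $\frac{1}{3}$. The number of rounds is therefore a geometric random variable with constant expectation, and each round reads the input in realtime and so costs $O(|x 0 y|)$ steps. Multiplying gives a linear expected total running time.

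No serious obstacle arises: the corollary is essentially the packaging of the preceding theorem with the definitions of exactness and of automata with restart. The only thing to confirm is that the per-round decisive probability is bounded below by a positive constant, which is immediate once $k$ is chosen to be any fixed positive integer.
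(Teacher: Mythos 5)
Your proposal is correct and follows essentially the same route as the paper: relabel the neutral state as a restarting state, observe that the Las Vegas construction never accepts a no-instance nor rejects a yes-instance (so normalization over accept/reject yields zero error), and use the constant per-round success probability to get linear expected time. Nothing further is needed.
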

\begin{proof}
	In the above proof, we change the neutral states to restarting states, and then obtain the desired machine. For any promised input, the input is either only accepted or only rejected. Since the success probability is constant ($ p $), the expected runtime is $ \frac{1}{p} |w| $ for the promised input $w$.
\end{proof} 

We conjecture that bounded-error 2QCFAs cannot solve $ \palnpal $ and $ \pal $ in polynomial time. Moreover, we leave open whether there exists a promise problem (or a language) solvable by bounded-error AfAs but not by 2QCFAs. 

\section{Bounded-error algorithms}

Similar to $ \pal $, the language $ \twin =  \{ w0w \mid w \in \{1,2\}^* \} $ can be also recognized by one-sided bounded-error AfAs (see also \cite{VilY16A}). After making some straightforward modifications, we can show that the language
\[
 \twint = \{ w_1 0 w_2 0 \cdots 0 w_t 3 w_t 0 \cdots 0 w_2 0 w_1 \mid w_i \in \{1,2\}^*, 1 \leq i \leq t  \}
\]
for some $ t>0 $ can also be recognized by negative one-sided bounded-error AfAs.

Since it is a non-regular language, it cannot be recognized by bounded-error PFAs and QFAs \cite{AY15A}. On the other hand, we can easily give a bounded-error 2QCFA algorithm for $\twint$ \cite{YS10B} but similarly to $ \pal $ it runs in exponential expected time. By using the impossibility proof given for $\pal$ \cite{DS92,RasY14A}, we can also show that $ \twint $ can be recognized by 2PFAs only if augmented with a logarithmic amount of memory. From the literature \cite{Ros66},\footnote{In the original language, there is a symbol 0 instead of the symbol 3. But since $t$ is fixed, the middle 0 can be easily detected by using internal states and so the results regarding the original language still hold for this modified version.} we also know that this language can be recognized by a DFA having at least $ k $ heads, where
\[
	t \leq \myvector{k \\ 2} 
	\mbox{ or }
	k = \myceil{\sqrt{2t+\frac{1}{4}}-\frac{1}{2}} .
\] 
Moreover, using nondeterminism and additional pushdown store does not help to save a single head \cite{ChLi88}. Bounded-error PFAs can recognize $ \twint $ by using two heads but the error increases when $ t $ gets bigger \cite{Yak11B,Yak12B}. For a fixed error, we do not know any PFA algorithm using a fixed number of heads. The same result is also followed for bounded-error QFAs with a stack. (It is open whether bounded-error PFAs can recognize $ \twint $ by using a stack \cite{YFSA12A}.) 

Based on $ \twint $, we define a seemingly harder language $ \manytwins $ that is defined by the union of all $ \twint $s:
\[
	\manytwins = \bigcup_{t = 1}^\infty \twint .
\]
Since the number $t$ is not known in advance, we do not know how to design a similar algorithm for the affine, quantum, and classical models discussed above. On the other hand, this language seems a good representative example for how a counter helps for AfAs.

\begin{theorem}
	The language $ \manytwins $ can be recognized by an AfCA $A$ with one-sided bounded-error arbitrarily close to zero.
\end{theorem}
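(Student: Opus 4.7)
The plan is to use the counter as a natural block index and affine superposition over counter values to carry out the $t$ block-equality tests in parallel. I will set the counter to increment on every $0$ before the $3$ and to decrement on every $0$ after the $3$; then the $i$-th block on either side is read precisely while the counter equals $i-1$, so the two copies of $w_i$ always appear at the same counter value. A light deterministic track running alongside will verify structural well-formedness (exactly one $3$, symbols drawn from $\{0,1,2,3\}$, counter returning to $0$), rejecting any malformed input.

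Inside a block I will build $e(w_i)$ as the amplitude of a designated ``positive'' encoding configuration using the base-$3$ construction from the $\palnpal$ proof (the matrices $M_1, M_2$ there). In parallel, a ``ghost'' subsystem with the same transition structure but initialised to sum to $-1$ rather than $+1$ will maintain the exact negative $-e(w_i)$ in a mirror configuration; a single ``balance'' branch with amplitude $1$ compensates, so the total amplitude stays $1$ throughout. On a left-side block terminator (a $0$, or the $3$ for the last left block), the positive encoding is routed to a saved configuration $\aket{s^+_{\mathrm{sv}}, i-1}$ with amplitude $+e(w_i)$ and the ghost encoding is routed to a paired $\aket{s^-_{\mathrm{sv}}, i-1}$ with amplitude $-e(w_i)$; the remaining auxiliary positive and ghost states (those carrying $e(w_i^r)$, $3^{|w_i|}$, and the $\overline{1}$-balancers) are funneled into a common discard configuration where their opposite amplitudes merge and cancel to $0$. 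The balance branch simultaneously seeds a fresh positive-plus-ghost encoding at the next counter value for the next block, and saved pairs are made to self-loop inertly on every subsequent symbol. When the right-side copy of $w_i$ is later read at counter $i-1$ and its terminator (a $0$, or the end-marker for $w_1'$) fires, the analogous right-side encoding is merged into the saved pair with opposite sign, so the amplitudes at $\aket{s^\pm_{\mathrm{sv}}, i-1}$ become $\pm(e(w_i)-e(w_i')) = \pm d_i$, vanishing iff $w_i = w_i'$.

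On the right end-marker (which simultaneously triggers the last-block merge for $w_1, w_1'$ and the global amplification), I will apply an operator with a large integer parameter $k$ that maps each $\aket{s^\pm_{\mathrm{sv}}, c}$ to $k\aket{s^\pm_{\mathrm{rej}}, c} + (1-k)\aket{s^\pm_{\mathrm{jk}}, c}$ and sends the balance branch into a unique accepting configuration with coefficient $1$. For a yes-instance every $d_c = 0$: the rejecting and amplification-junk configurations all vanish, and only the accepting branch (amplitude $1$) survives, so the $l_1$-norm equals $1$ and acceptance probability is exactly $1$. For a no-instance at least one $|d_c| \geq 1$, so the combined rejecting plus amplification-junk $l_1$-mass is at least $(4k-2)$ while the accepting mass stays at $1$; acceptance probability is therefore at most $1/(4k-1)$, which is arbitrarily small in $k$. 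The main obstacle will be the bookkeeping: making every column of every transition matrix sum to $1$ while simultaneously ensuring that (i) the positive and ghost encodings stay in precise sign opposition at every intermediate symbol, (ii) the paired auxiliary amplitudes always cancel to $0$ at the common discard configurations (so for yes-instances nothing leaks outside the accepting configuration), (iii) saved branches remain inert while the main encoding is re-initialised at each block boundary, and (iv) the balance branch is shepherded intact into the accepting configuration --- fiddly but routine once the role of each state is fixed.
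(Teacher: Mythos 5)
Your proposal is correct and takes essentially the same approach as the paper's proof: the counter serves as a block index (incremented on left-side $0$s, decremented on right-side $0$s), each block's base-3 encoding is parked as a cancelling $\pm$ pair of amplitudes at its counter value, the mirrored blocks are subtracted at the same counter value, and amplification by a large $k$ together with the structural checks (exactly one $3$, counter back to zero) gives acceptance probability $1$ for members and $O(1/k)$ for non-members. The differences are only bookkeeping: the paper folds your explicit ``ghost'' copy into a single balancing state of a leaner encoder and applies the factor $k$ when storing each block rather than at the end-marker.
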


\begin{proof}
	The automaton $A$ has 10 states: $ \{s_1,s_2,s_3,s'_1,s'_2,s'_3,s_e,s'_e,s_a,s_r\} $, $ s_1 $ is the initial state, and $s_a$ is the only accepting state.
	
	Let $ k $ be an arbitrarily big integer. If there is no symbol 3, then the automaton $A$ never switches to the state $ s_a $ and so the input is accepted with zero probability. We assume then the input has at least one symbol 3 from now on.
	
	The automaton $A$ stays in $s_1$ without changing the value of the counter when reading $ \cent $. Then, until reading the first $ 3 $, it uses the following transitions. 
	
	 Let $ u_1 = w_1 0 w_2 0 \cdots 0 w_t 3 $ be the prefix of the input until the first $ 3 $, where $ w_i \in \{1,2\}^* $ for each $ i \in \{1,\ldots,t\} $ and $ t \geq 1 $.
	 
	When reading a block of $ \{1,2\}^* $, say $w_i$, before a symbol 0 or the symbol 3, it encodes $ w_i $ into the value of $ s_2 $ in base-3 by help of the states $s_1$ and $s_3$. If $ w_i $ is the empty string, then the value of $ s_2 $ becomes 0. During encoding, the value of $ s_1 $, which is 1,  does not change and the value of $ s_3 $ is updated to have a well-formed affine state.
	
	 After reading a 0:
	\begin{itemize}
	\item It stays in $ s_1 $ and increases the value of the counter by 1.
	\item The value of $ s_2 $ is $ e(w_i) $ before the transition. Then the values of $ s_e $ and $ s'_e $ are set to $ ke(w_i) $ and $-ke(w_i)$, respectively, and the value of the counter does not change. Moreover, the value of $s_2$ is set to zero. 
	\item Due to the above transitions, the value of $ s_3 $ is automatically set to zero.
	\end{itemize}
	
	After reading the first $ 3 $:
	\begin{itemize}
	\item It switches from $ s_1 $ to $ s'_1 $ without changing the value of the counter.
	\item The value of $ s_2 $ is $ e(w_t) $ before the transition. Then the values of $ s_e $ and $ s'_e $ are set to $ ke(w_t) $ and $-ke(w_t)$, respectively, and the value of the counter does not change. Moreover, the value of $s_2$ is set to zero. 
	\item Due to the above transitions, the value of $ s_3 $ is automatically set to zero.
	\end{itemize}	
	 Then, after reading $u_1$, the affine state will be
	 \[
	 	\aket{s'_1,t-1} + \sum\limits_{i=1}^t \left( ke(w_i)\aket{s_e,i-1} - ke(w_i)\aket{s'_e,i-1} \right),
	 \]
	 where, by using the different values of the counter, $k$ times the encoding of each $ w_i $ is stored as the values of $ s_e $ and $ s'_e $. If $ t=0$ ($ u_1 = 3 $), then the affine state will be $ \aket{s'_1,0} $. 
	 
	 If after reading $u_1$ the automaton reads another symbol $ 3 $, then it switches to $ s_r $ from $ s'_1 $, $ s'_2 $, and $ s'_3 $, and then stays there until the end of the computation. Thus, in such a case, the input is also accepted with zero probability. Therefore, in the last part, we assume that the input does not have another symbol $3$. 
	 
	 Let $ u_2 = w'_z 0 w'_{z-1} 0 \cdots 0 w'_1 \dollar $ be the part to be read after the symbol 3, where $ w'_j \in \{1,2\}^* $ for each $ j \in \{1,\ldots,z\} $ and $ z>0 $. With a similar strategy, when reading a block of $ \{1,2\}^* $, say $ w'_j $, before a symbol 0 or the symbol $\dollar$, the automaton encodes it into the value of $ s'_2 $ in base 3 by the help of states $ s'_1 $ and $ s'_3 $. If $ w'_j $ is the empty string, then the value of $ s'_2 $ is zero. During the encoding, the value of $ s'_1 $, which is 1, does not change and the value of $ s'_3 $ is updated to have a well-formed affine state.

 After reading a 0:
	\begin{itemize}
	\item It stays in $ s'_1 $ and decreases the value of the counter by 1.
	\item The value of $ s'_2 $ is $ e(w'_j) $ before the transition. Then the values of$ -ke(w'_j) $ and $ke(w'_j)$ are added to $ s_e $ and $ s'_e $, respectively, and the value of the counter does not change. Moreover, the value of $s'_2$ is set to zero. 
	\item Due to the above transitions, the value of $ s'_3 $ is automatically set to zero.
	\end{itemize}
	
	After reading the $ \dollar $:
	\begin{itemize}
	\item It switches from $ s'_1 $ to $ s_a $ without changing the value of the counter.
	\item The value of $ s'_2 $ is $ e(w'_1) $ before the transition. Then the values of $ -ke(w'_1) $ and $ke(w'_1)$ are added to $ s_e $ and $ s'_e $, respectively, and the value of the counter does not change. Moreover, the value of $s'_2$ is set to zero.
	\item Due to the above transitions, the value of $ s'_3 $ is automatically set to zero.
	\end{itemize}	
	
	If $ z = 0 $, then the only transition is switching from $ s'_1 $ to $ s_a $.
	
	Suppose that $ t = z > 0 $. Then it is clear that if $ w_t = w'_z $, then the values of the affine state $ \aket{s_e,t-1} $ and $ \aket{s'_e,t-1} $ will be set to zero. Otherwise, their values will respectively be $ k(e(w_t)-e(w'_z)) $ and $ -k(e(w_t)-e(w'_z)) $, the absolute value of each will be at least $ k $. The same situation holds for each pair $ (w_i,w'_j) $ where $ 1 \leq i=j \leq t $. That means, if the input is a member (including the case of $ t=z=0 $), then the final affine state will be
	$
		\aket{s_a,0}
	$
	and so the input is accepted with probability 1. 
		
	On the other hand, if the input is not a member, then the final affine state will have some non-zero coefficients as the values of some configuration like
	$
		\aket{s_e,l} \mbox{ and } \aket{s'_e,l}
	$
	for some $ l $. As described above, the absolute values of these non-zero coefficients are at least $ k $. Thus, any non-member will be accepted with probability at most $ \frac{1}{2k+1} $.    
	By picking a sufficiency big $ k $, the success probability can be arbitrarily close to 1.
\end{proof}

In the algorithm given in the proof, the status of the counter is never checked and for each member the value of the counter is set to zero. Thus, it is indeed a blind counter algorithm (\cite{Gre78}): The status of the counter is never checked during the computation and the input is accepted only if the value of the counter is zero at the end of computation. If the value of the counter is non-zero, the input is automatically rejected regardless of the state.

\section{Concluding remarks}

We introduced affine counter automata as an extended model of affine finite automata, and showed a separation result between exact affine and deterministic models. We also showed that a certain promise problem, which cannot be solved by bounded-error 2PFAs with sublogarithmic space and is also conjectured not to be solved by two-way quantum finite automata in polynomial time, can be solved by Las Vegas affine finite automata in linear time. Lastly, we showed that a counter helps for AfAs by showing that $ \manytwins $, which is conjectured not to be recognized by affine, quantum or classical finite state models in polynomial time, can be recognized by affine counter automata with one-sided bounded-error in realtime read mode. Since AfCAs are quantum like computation models that can use negative values, we believe that AfCAs can well characterize quantum counter automata and it remains as a future work.

\section*{Acknowledgements}

Nakanishi was supported by JSPS KAKENHI Grant Numbers 24500003, 24106009 and 16K00007, and also by the Asahi Glass Foundation. Khadiev, Vihrovs, and Yakary{\i}lmaz were supported by ERC Advanced Grant MQC. Prusis was supported by the Latvian State Research Programme NeXIT project No. 1.

\bibliographystyle{eptcs}
\bibliography{tcs}

\begin{thebibliography}{10}
\providecommand{\bibitemdeclare}[2]{}
\providecommand{\surnamestart}{}
\providecommand{\surnameend}{}
\providecommand{\urlprefix}{Available at }
\providecommand{\url}[1]{\texttt{#1}}
\providecommand{\href}[2]{\texttt{#2}}
\providecommand{\urlalt}[2]{\href{#1}{#2}}
\providecommand{\doi}[1]{doi:\urlalt{http://dx.doi.org/#1}{#1}}
\providecommand{\bibinfo}[2]{#2}

\bibitemdeclare{inbook}{AY15A}
\bibitem{AY15A}
\bibinfo{author}{Andris \surnamestart Ambainis\surnameend} \&
  \bibinfo{author}{Abuzer \surnamestart Yakary{\i}lmaz\surnameend}
  (\bibinfo{year}{To appear}): \emph{\bibinfo{title}{Automata: From Mathematics
  to Applications}}, chapter \bibinfo{chapter}{Automata and Quantum Computing}.
\newblock \urlprefix\url{http://arxiv.org/abs/1507.01988}.

\bibitemdeclare{techreport}{BMY16A}
\bibitem{BMY16A}
\bibinfo{author}{Aleksandrs \surnamestart Belovs\surnameend},
  \bibinfo{author}{Juan~Andres \surnamestart Montoya\surnameend} \&
  \bibinfo{author}{Abuzer \surnamestart Yakary{\i}lmaz\surnameend}
  (\bibinfo{year}{2016}): \emph{\bibinfo{title}{Can one quantum bit separate
  any pair of words with zero-error?}}
\newblock \bibinfo{type}{Technical Report}. 
\urlprefix\url{http://arxiv.org/abs/1602.07967}.

\bibitemdeclare{article}{ChLi88}
\bibitem{ChLi88}
\bibinfo{author}{Marek \surnamestart Chrobak\surnameend} \&
  \bibinfo{author}{Ming \surnamestart Li\surnameend} (\bibinfo{year}{1988}):
  \emph{\bibinfo{title}{$ k+1 $ Heads are Better than $ k $ for $ \mbox{PDA}
  $s}}.
\newblock {\sl \bibinfo{journal}{Journal of Computer and System Sciences}}
  \bibinfo{volume}{37}, pp. \bibinfo{pages}{144--155},
\doi{10.1016/0022-0000(88)90004-9}.

\bibitemdeclare{inproceedings}{DCY16A}
\bibitem{DCY16A}
\bibinfo{author}{Alejandro \surnamestart D\'{i}az-Caro\surnameend} \&
  \bibinfo{author}{Abuzer \surnamestart Yakary{\i}lmaz\surnameend}
  (\bibinfo{year}{2016}): \emph{\bibinfo{title}{Affine Computation and Affine
  Automaton}}.
\newblock In: {\sl \bibinfo{booktitle}{Computer Science - Theory and
  Applications}}, {\sl \bibinfo{series}{Lecture Notes in Computer Science}}
  \bibinfo{volume}{9691}, \bibinfo{publisher}{Springer}, pp.
  \bibinfo{pages}{146--160}, \doi{10.1007/978-3-319-46976-8}.
\newblock Also available as \url{http://arxiv.org/abs/1602.04732}.

\bibitemdeclare{article}{DS92}
\bibitem{DS92}
\bibinfo{author}{Cynthia \surnamestart Dwork\surnameend} \&
  \bibinfo{author}{Larry \surnamestart Stockmeyer\surnameend}
  (\bibinfo{year}{1992}): \emph{\bibinfo{title}{Finite state verifiers
  $\mbox{I}$: The power of interaction}}.
\newblock {\sl \bibinfo{journal}{Journal of the ACM}}
  \bibinfo{volume}{39}(\bibinfo{number}{4}), pp. \bibinfo{pages}{800--828},
\doi{10.1145/146585.146599}.

\bibitemdeclare{article}{GefY15A}
\bibitem{GefY15A}
\bibinfo{author}{Viliam \surnamestart Geffert\surnameend} \&
  \bibinfo{author}{Abuzer \surnamestart Yakary{\i}lmaz\surnameend}
  (\bibinfo{year}{2015}): \emph{\bibinfo{title}{Classical Automata on Promise
  Problems}}.
\newblock {\sl \bibinfo{journal}{Discrete Mathematics {\&} Theoretical Computer
  Science}} \bibinfo{volume}{17}(\bibinfo{number}{2}), pp.
  \bibinfo{pages}{157--180}, \doi{10.1007/978-3-319-09704-6_12}.

\bibitemdeclare{article}{Gre78}
\bibitem{Gre78}
\bibinfo{author}{S.~A. \surnamestart Greibach\surnameend}
  (\bibinfo{year}{1978}): \emph{\bibinfo{title}{Remarks on Blind and Partially
  Blind One-Way Multicounter Machines}}.
\newblock {\sl \bibinfo{journal}{Theoretical Computer Science}}
  \bibinfo{volume}{7}, pp. \bibinfo{pages}{311--324},
\doi{10.1016/0304-3975(78)90020-8}.

\bibitemdeclare{inproceedings}{HMY17}
\bibitem{HMY17}
\bibinfo{author}{Mika \surnamestart Hirvensalo\surnameend},
  \bibinfo{author}{Etienne \surnamestart Moutot\surnameend} \&
  \bibinfo{author}{Abuzer \surnamestart Yakary{\i}lmaz\surnameend}
  (\bibinfo{year}{2017}): \emph{\bibinfo{title}{On the Computational Power of
  Affine Automata}}.
\newblock In: {\sl \bibinfo{booktitle}{Language and Automata Theory and
  Applications}}, {\sl \bibinfo{series}{Lecture Notes in Computer Science}}
  \bibinfo{volume}{10168}, pp. \bibinfo{pages}{405--417},
\doi{10.1007/978-3-319-41312-9\_10}.

\bibitemdeclare{techreport}{IKPVY17A}
\bibitem{IKPVY17A}
\bibinfo{author}{Rishat \surnamestart Ibrahimov\surnameend},
  \bibinfo{author}{Kamil \surnamestart Khadiev\surnameend},
  \bibinfo{author}{Krisjanis \surnamestart Prusis\surnameend},
  \bibinfo{author}{Jevgenijs \surnamestart Vihrovs\surnameend} \&
  \bibinfo{author}{Abuzer \surnamestart Yakary{\i}lmaz\surnameend}
  (\bibinfo{year}{2017}): \emph{\bibinfo{title}{Zero-Error Affine, Unitary, and
  Probabilistic OBDDs}}.
\newblock \bibinfo{type}{Technical Report}. \urlprefix\url{http://arxiv.org/abs/1703.07184}.

\bibitemdeclare{inproceedings}{NY15A}
\bibitem{NY15A}
\bibinfo{author}{Masaki \surnamestart Nakanishi\surnameend} \&
  \bibinfo{author}{Abuzer \surnamestart Yakary{\i}lmaz\surnameend}
  (\bibinfo{year}{2015}): \emph{\bibinfo{title}{Classical and Quantum Counter
  Automata on Promise Problems}}.
\newblock In: {\sl \bibinfo{booktitle}{Implementation and Application of
  Automata}}, {\sl \bibinfo{series}{LNCS}} \bibinfo{volume}{9223},
  \bibinfo{publisher}{Springer}, pp. \bibinfo{pages}{224--237},
\doi{10.1007/978-3-319-22360-5\_19}.

\bibitemdeclare{inproceedings}{RasY14A}
\bibitem{RasY14A}
\bibinfo{author}{Jibran \surnamestart Rashid\surnameend} \&
  \bibinfo{author}{Abuzer \surnamestart Yakary{\i}lmaz\surnameend}
  (\bibinfo{year}{2014}): \emph{\bibinfo{title}{Implications of quantum
  automata for contextuality}}.
\newblock In: {\sl \bibinfo{booktitle}{Implementation and Application of
  Automata}}, {\sl \bibinfo{series}{LNCS}} \bibinfo{volume}{8587},
  \bibinfo{publisher}{Springer}, pp. \bibinfo{pages}{318--331},
\doi{10.1007/978-3-319-08846-4_24}.

\bibitemdeclare{article}{Ros66}
\bibitem{Ros66}
\bibinfo{author}{Arnold~L. \surnamestart Rosenberg\surnameend}
  (\bibinfo{year}{1966}): \emph{\bibinfo{title}{On multi-head finite
  automata}}.
\newblock {\sl \bibinfo{journal}{IBM Journal of Research and Development}}
  \bibinfo{volume}{10}(\bibinfo{number}{5}), pp. \bibinfo{pages}{388--394},
\doi{10.1147/rd.105.0388}.

\bibitemdeclare{inproceedings}{SayY14A}
\bibitem{SayY14A}
\bibinfo{author}{A.~C.~Cem \surnamestart Say\surnameend} \&
  \bibinfo{author}{Abuzer \surnamestart Yakary{\i}lmaz\surnameend}
  (\bibinfo{year}{2014}): \emph{\bibinfo{title}{Quantum Finite Automata: A
  Modern Introduction}}.
\newblock In: {\sl \bibinfo{booktitle}{Computing with New Resources}}, {\sl
  \bibinfo{series}{LNCS}} \bibinfo{volume}{8808}, \bibinfo{publisher}{Springer
  International Publishing}, pp. \bibinfo{pages}{208--222},
\doi{10.1007/978-3-319-13350-8\_16}.

\bibitemdeclare{article}{SY12A}
\bibitem{SY12A}
\bibinfo{author}{A.~C.~Cem \surnamestart Say\surnameend} \&
  \bibinfo{author}{Abuzer \surnamestart Yakary�lmaz\surnameend}
  (\bibinfo{year}{2012}): \emph{\bibinfo{title}{Quantum counter automata}}.
\newblock {\sl \bibinfo{journal}{International Journal of Foundations of
  Computer Science}} \bibinfo{volume}{23}(\bibinfo{number}{5}), pp.
  \bibinfo{pages}{1099--1116},
\doi{10.1016/S0304-3975(01)00412-1}.

\bibitemdeclare{inproceedings}{VilY16A}
\bibitem{VilY16A}
\bibinfo{author}{Marcos \surnamestart Villagra\surnameend} \&
  \bibinfo{author}{Abuzer \surnamestart Yakary{\i}lmaz\surnameend}
  (\bibinfo{year}{2016}): \emph{\bibinfo{title}{Language Recognition Power and
  Succinctness of Affine Automata}}.
\newblock In: {\sl \bibinfo{booktitle}{Unconventional Computation and Natural
  Computation}}, {\sl \bibinfo{series}{Lecture Notes in Computer Science}}
  \bibinfo{volume}{9726}, \bibinfo{publisher}{Springer}, pp.
  \bibinfo{pages}{116--129},
\doi{10.1007/978-3-319-34171-2\_11}.

\bibitemdeclare{inproceedings}{Yak11B}
\bibitem{Yak11B}
\bibinfo{author}{Abuzer \surnamestart Yakaryilmaz\surnameend}
  (\bibinfo{year}{2011}): \emph{\bibinfo{title}{Superiority of One-Way and
  Realtime Quantum Machines and New Directions}}.
\newblock In: {\sl \bibinfo{booktitle}{Third Workshop on Non-Classical Models
  for Automata and Applications - {NCMA} 2011}}, {\sl
  \bibinfo{series}{books@ocg.at}} \bibinfo{volume}{282},
  \bibinfo{publisher}{Austrian Computer Society}, pp.
  \bibinfo{pages}{209--224}.
\newblock \urlprefix\url{http://arxiv.org/abs/1102.3093v1}.

\bibitemdeclare{article}{Yak12B}
\bibitem{Yak12B}
\bibinfo{author}{Abuzer \surnamestart Yakary{\i}lmaz\surnameend}
  (\bibinfo{year}{2012}): \emph{\bibinfo{title}{Superiority of one-way and
  realtime quantum machines}}.
\newblock {\sl \bibinfo{journal}{RAIRO - Theoretical Informatics and
  Applications}} \bibinfo{volume}{46}(\bibinfo{number}{4}), pp.
  \bibinfo{pages}{615--641},
\doi{10.1051/ita/2012018}.

\bibitemdeclare{article}{YFSA12A}
\bibitem{YFSA12A}
\bibinfo{author}{Abuzer \surnamestart Yakary{\i}lmaz\surnameend},
  \bibinfo{author}{R\={u}si\c{n}\v{s} \surnamestart Freivalds\surnameend},
  \bibinfo{author}{A.~C.~Cem \surnamestart Say\surnameend} \&
  \bibinfo{author}{Ruben \surnamestart Agadzanyan\surnameend}
  (\bibinfo{year}{2012}): \emph{\bibinfo{title}{Quantum computation with
  write-only memory}}.
\newblock {\sl \bibinfo{journal}{Natural Computing}}
  \bibinfo{volume}{11}(\bibinfo{number}{1}), pp. \bibinfo{pages}{81--94},
\doi{10.1007/s11047-011-9270-0}.


\bibitemdeclare{article}{YS10B}
\bibitem{YS10B}
\bibinfo{author}{Abuzer \surnamestart Yakary{\i}lmaz\surnameend} \&
  \bibinfo{author}{A.~C.~Cem \surnamestart Say\surnameend}
  (\bibinfo{year}{2010}): \emph{\bibinfo{title}{Succinctness of two-way
  probabilistic and quantum finite automata}}.
\newblock {\sl \bibinfo{journal}{Discrete Mathematics and Theoretical Computer
  Science}} \bibinfo{volume}{12}(\bibinfo{number}{2}), pp.
  \bibinfo{pages}{19--40}. Also available as 
\url{http://arxiv.org/abs/0903.0050}.

\bibitemdeclare{article}{YS13A}
\bibitem{YS13A}
\bibinfo{author}{Abuzer \surnamestart Yakary{\i}lmaz\surnameend} \&
  \bibinfo{author}{A.~C.~Cem \surnamestart Say\surnameend}
  (\bibinfo{year}{2013}): \emph{\bibinfo{title}{Proving the Power of
  Postselection}}.
\newblock {\sl \bibinfo{journal}{Fundamenta Informaticae}}
  \bibinfo{volume}{123}(\bibinfo{number}{1}), pp. \bibinfo{pages}{107--134}.
Also availbale as \url{http://arxiv.org/abs/1111.3125}.

\bibitemdeclare{article}{Yu89}
\bibitem{Yu89}
\bibinfo{author}{Sheng \surnamestart Yu\surnameend} (\bibinfo{year}{1989}):
  \emph{\bibinfo{title}{A pumping lemma for deterministic context-free
  languages}}.
\newblock {\sl \bibinfo{journal}{Information Processing Letters}}
  \bibinfo{volume}{31}(\bibinfo{number}{1}), pp. \bibinfo{pages}{47--51},
\doi{10.1016/0020-0190(89)90108-7}.

\end{thebibliography}

\end{document}